\newtheorem{Proposition}{Proposition}
\def\blfootnote{\xdef\@thefnmark{}\@footnotetext}
\begin{document}
	
		\title{\huge{Performance Analysis of RIS/STAR-IOS-aided V2V NOMA/OMA Communications over Composite Fading Channels}} 
\author{Farshad~Rostami~Ghadi\IEEEmembership{},~Masoud~Kaveh,~and Diego~Mart\'in}
	\maketitle
	
	\begin{abstract}
This paper investigates the performance of vehicle-to-vehicle (V2V) communications assisted by a reconfigurable intelligent surface (RIS) and a simultaneous transmitting and reflecting intelligent omni-surface (STAR-IOS) under non-orthogonal multiple access (NOMA) and orthogonal multiple access (OMA) schemes. In particular, we consider that the RIS is close to the transmitter vehicle while the STAR-IOS is near the receiver vehicles. In addition, we assume that the STAR-IOS exploits the energy-splitting (ES) protocol for communication and the fading channels between the RIS and STAR-IOS follow composite Fisher-Snedecor $\mathcal{F}$ distribution. Under such assumptions, we first use the central limit theorem (CLT) to derive the PDF and the CDF of equivalent channels at receiver vehicles, and then, we derive the closed-form expression of outage probability (OP) under NOMA/OMA scenarios. Additionally, by exploiting Jensen's inequality, we propose an upper bound of the ergodic capacity (EC), and then, we derive an analytical expression of the energy efficiency (EE) for both NOMA and OMA cases. Further, our analytical results, which are double-checked with the Monte-Carlo simulation, reveal that applying RIS/STAR-RIS in V2V communications can significantly improve the performance of intelligent transportation systems (ITS). Besides, the results indicate that considering the NOMA scheme provides better performance in terms of the OP, EC, and EE as compared with the OMA case for the considered V2V communication.   
	\end{abstract}
	\begin{IEEEkeywords}
		V2V communications, STAR-IOS, NOMA, Fisher-Snedecor $\mathcal{F}$ fading, ergodic capacity, energy efficiency
	\end{IEEEkeywords}
	\maketitle
	\blfootnote{\noindent 
	}
	
	\blfootnote{\noindent Farshad Rostami Ghadi is with the Communications and Signal Processing Lab, Telecommunication Research Institute (TELMA), Universidad de M\'alaga, M\'alaga, 29010, Spain (e-mail: $\rm farshad@ic.uma.es$).}
		\blfootnote{\noindent Masoud Kaveh is with the Department of Information and Communication Engineering, Aalto University, Espoo, Finland (e-mail: $\rm masoud.kaveh@aalto.fi$).}
	\blfootnote{\noindent Diego Mart\'in is with the ETSI de Telecomunicaci\'on, Universidad Polit\'ecnica de Madrid, 28040 Madrid, Spain (e-mail: diego.martin.de.andres@upm.es).
			
		\thanks{Corresponding author: Diego Mart\'in.}
	
	\thanks{Digital Object Identifier 10.1109/XXX.2021.XXXXXXX}}
	
	\vspace{-0.5cm}
	\section{Introduction}\label{sec-intro}
Developments in the context of intelligent transportation systems (ITSs) have led to significant attention being paid to vehicular communication in recent years. Vehicular communications, also known as vehicle-to-everything (V2X) communications, are promising technology to support cutting-edge applications, such as self-driving cars which will be provided by the sixth-generation (6G) wireless technologies \cite{wang2021green}. Furthermore, V2X communications, e.g., vehicle-to-vehicle (V2V) communications, will provide remarkable advantages such as enhancing the quality-of-service (QoS) requirements of the vehicle user equipment, improving road safety and air quality, and increasing the cooperation awareness of all vehicles in the local driving environment \cite{noor20226g}. However, to achieve the above-mentioned ambitious goals in future wireless technologies such as 6G, it will be needed the integration of a range of technologies, including unmanned aerial vehicles (UAVs), reconfigurable intelligent surfaces (RISs), visible light communication (VLC), edge caching, non-orthogonal multiple access (NOMA), etc. Therefore, by combining such key technologies, future 6G-V2X networks will be able to provide an intelligent, autonomous, user-driven connectivity and service platform for ITSs. 

RIS has been recently introduced as a promising technology due to its capability to extend coverage area and provide high spectral/energy efficiency for 6G wireless communications \cite{basar2019wireless}. Generally speaking, RIS is an artificial metasurface with a large number of low-cost passive reflecting elements that can intelligently and adaptively enable a smart radio environment (SRE) and maximize the desired signal quality at receivers. However, one of the key limitations of RIS is that the transmitter and receiver need to be located on the same side of the RIS, meaning that the RIS can only support the half-space of the SRE. To address this challenge, a novel concept of simultaneous transmitting and reflecting intelligent omni-surface (STAR-IOS) has been proposed which can enable a 360$^{\circ}$ coverage region of SREs \cite{liu2021star}. By exploiting this unique feature of STAR-IOS, separate reflection and refraction passive beamforming can be designed for the various serving regions. Moreover, given the superiority of STAR-IOS in providing full-space coverage area, there has been recently a growing interest in integrating these devices with NOMA techniques. Generally, in the NOMA scheme, multiple signals are superposed in the power domain and transmitted simultaneously over the same frequency/time channel, and then, successive interference cancellation (SIC) is applied at the receiver side to remove the interference caused by superposition coding \cite{saito2013non}. Consequently, NOMA can provide better performance compared with orthogonal multiple access (OMA) in terms of high spectrum efficiency, massive connectivity, and balancing user fairness. 

Great efforts have been recently carried out to investigate the performance of RIS/STAR-IOS-assisted V2X  communications. In \cite{chen2020resource}, the authors studied the power allocation problem for RIS-aided V2V communication systems based on slowly varying large-scale fading channel information. By considering a stochastic geometry-based analytical framework, the authors in \cite{singh2022visible} analyzed the performance of RIS-aided V2V communication networks in terms of the outage probability (OP), throughput, and delay outage rate (DOR). Given the blockage and vehicle density in practical road conditions, the authors in \cite{wang2020outage} derived a series-form expression of the OP for a RIS-aided vehicular communication network, using the central limit theorem (CLT). By considering the low-latency and high-reliability requirements of V2V links, the authors in \cite{gu2022intelligent} evaluated the ergodic capacity (EC) optimization problem and derived a closed-form expression of optimal phase shifts in vehicle-to-infrastructure (V2I) communications with the help of RIS. The authors in \cite{pala2023design} also optimized the achievable sum-rate for a RIS-aided V2X communication system under full-duplex (FD) mode. 
 Aiming to provide reliable and energy-efficient connectivity for autonomous vehicles, the authors in \cite{ozcan2021reconfigurable} optimized the placement problem of RISs over a V2X communication system. Motivated by the potential of RIS in vehicular networks, the secrecy performance metrics for RIS-aided V2V communication under Rayleigh fading channels were also analyzed in \cite{ai2021secure}. By considering a dual RIS-aided V2I communication system, the authors in \cite{shaikh2022performance} derived closed-form expressions of the OP, upper/lower bounds of spectral efficiency (SE), and energy efficiency (EE) under Nakagami-$m$ fading channels. 
 By comparing the NOMA and OMA schemes, the OP for a STAR-RIS-aided vehicular communication was derived in \cite{guo2023star}. 
  However, previous works did not consider RIS and STAR-IOS in vehicular networks simultaneously to analyze the system important performance metrics.   

Motivated by the aforesaid advantages of RIS and STAR-IOS techniques in intelligent practical applications of future wireless networks, e.g., 6G, as well as the inherent potential of the NOMA scheme in providing higher SE over various communication systems, in this paper, we integrate the emerging RIS and STAR-IOS technologies to analyze the performance of vehicular networks, e.g., V2V communication systems, under OMA and NOMA schemes. To the best of the author's knowledge, there has been no previous work that exploits the RIS and   STAR-RIS simultaneously to evaluate V2X communications under composite fading channels. To this end, in particular, we consider a V2V communication, where a transmitter vehicle wants to send an independent message to receiver vehicles with the help of RIS and STAR-IOS which are located close to the transmitter and receivers, respectively. Besides, in order to accurately model the statistical characteristics of fading channel coefficients, we assume that the channels between the RIS and STAR-IOS undergo Fisher-Snedecor $\mathcal{F}$ distribution. Moreover, after determining the marginal distributions of the equivalent channel at the receivers by exploiting the CLT, we derive analytical expressions of important performance metrics in wireless communications, and then, we evaluate the efficiency of the proposed system model in terms of derived metrics. Thus, the main contributions of our work are summarized as follows

\textbullet\, By utilizing the CLT, we first obtain analytical expressions of the cumulative distribution function (CDF) and probability density function (PDF) for the signal-to-noise ratio (SNR) at the receiver vehicles.

\textbullet\, Then, by exploiting the provided CDF and PDF, we derive the closed-form expression of the OP, a tight upper bound of EC, and the analytical expression of EE for both OMA and NOMA schemes under Fisher-Snedecor $\mathcal{F}$ fading channels. 

\textbullet\, Eventually, we evaluate the performance of the considered RIS/STAR-IOS system in terms of the OP, EC, and EE. To do so, we double-check the accuracy of our analytical expressions with Monte-Carlo simulation, where the numerical results confirm that considering RIS/STAR-IOS in V2V communications is quite beneficial for the system performance and that the use of NOMA provides better performance compared to the OMA counterpart.\vspace{-0.5cm}
	\section{System Model}\label{sec-sys}
	\subsection{Channel Model}
	We consider a wireless V2V NOMA communication scenario as shown in Fig. \ref{fig-syst}, wherein a transmitter vehicle $u_\mathrm{s}$ aims to communicate information with  other receiver vehicles $u_\mathrm{w}$, $\mathrm{w}\in\{\mathrm{r,t}\}$, which are located in the reflection and refraction regions. We assume that the direct links between the vehicle transmitter $u_\mathrm{s}$ and the vehicle receivers $u_\mathrm{w}$ are blocked due to obstacles. Hence, we consider a RIS with $N_1$ elements and a STAR-IOS with $N_2$ elements to support this transmission, one each placed near the transmitter vehicle $u_\mathrm{s}$ and the receiver vehicles $u_\mathrm{w}$, respectively. In addition, it is assumed that the distance between the transmitter vehicle and the RIS, $d_\mathrm{sR}$, as well as the distance between the STAR-IOS and receiver vehicles, $d_\mathrm{Sw}$, are small, thereby, the corresponding channels can be properly modeled as deterministic line-of-sight (LoS) channels. However, by considering a large distance between the RIS and the STAR-IOS, $d_\mathrm{RS}$, it is presumed the quasi-static fading channels between the $k$-th element of RIS and the $l$-th element of STAR-IOS follow Fisher-Snedecor $\mathcal{F}$ distribution, i.e., 
	$h_{k,l}\sim\mathcal{F}\left(m_1,m_2\right)$, where $m_1$ and $m_2$ are degrees of freedom which can represent the fading severity parameter and the amount of shadowing of the root-mean-square (rms) signal power, respectively.  
	 Therefore, the received signal at $u_\mathrm{w}$ can be respectively expressed as
	\begin{align}
		Y_\mathrm{w}=\sqrt{P}\left(\sum_{k=1}^{N_1}\sum_{l=1}^{N_2}\Phi_kh_{k,l}\Psi_{\mathrm{w},l}\right)\left(\sqrt{p_\mathrm{t}}X_\mathrm{t}+\sqrt{p_\mathrm{r}}X_\mathrm{r}\right)+Z_\mathrm{w}\label{eq-yw},
	\end{align}
in which $P$ denotes the total transmit power, $X_\mathrm{w}$ defines the symbol transmitted to $u_\mathrm{w}$ with unit power (i.e., $\mathbb{E}[|X_\mathrm{w}|^2]$), $p_\mathrm{w}$ is the power allocation factor for $u_\mathrm{w}$, so that $p_\mathrm{t}+p_\mathrm{r}=1$, and $Z_\mathrm{w}$ is the additive white Gaussian noise (AWGN) with zero mean and variance $\sigma_n^2$ at $u_\mathrm{w}$. The term $h_{k,l}=\eta_{k,l}d_{\mathrm{R,S}}^{-\kappa}\mathrm{e}^{-j\zeta_{k,l}}$ denotes the fading channel between the RIS and the STAR-IOS, in which $\kappa>2$, $\eta_{k,l}$, and $\zeta_{k,l}$ are the path-loss exponent, the amplitude of $h_{k,l}$, and the phase of $h_{k,l}$, respectively. Additionally, the term $\Phi_k=\mathrm{e}^{j\phi_k}$ contains $\phi_k$ which indicates the adjustable phase induced by the $k$-th reflecting element of the RIS. By assuming the energy-splitting (ES) protocol for the considered STAR-IOS model, all elements of the STAR-IOS simultaneously operate refraction and reflection modes, while the total radiation energy is split into
two parts, i.e., $\Psi_{\mathrm{w},l}=\beta_{\mathrm{w},l}\mathrm{e}^{j\psi_{\mathrm{w},l}}$, where $\psi_{\mathrm{w},l}$ denotes the adjustable phases induced by the $l$-th element of the STAR-IOS during refraction and reflection, whereas $\beta_{\mathrm{w},l}$ denote the
adjustable refraction/reflection coefficients of the STAR-IOS, with $\beta^2_{\mathrm{r},l}+\beta^2_{\mathrm{t},l}\leq 1$. In order to minimize the complexity of the system, we assume in the sequel that all elements have the same amplitude coefficients, i.e., $\beta_{\mathrm{w},l}=\beta_\mathrm{w}$, $\forall l=1,\dots,N_2$. Moreover, we consider the term $\mathcal{D}$ as the distance-dependent path-loss for LoS which can be defined, for a link distance $d$, at the carrier frequency of $3$ GHz as follows \cite{bjornson2019intelligent}
\begin{align}
\mathcal{D}\left(d\right)\left[dB\right]= -37.5-22\log_{10}\left(d/1 \mathrm{m}\right).\label{eq-loss}
\end{align}
\subsection{Multiple Access Schemes}
\subsubsection{NOMA}
As per the principles of NOMA, the transmitter vehicle $u_\mathrm{s}$ sends the signals of both receiver vehicles using the same time and frequency resources by superposition coding. Besides, the NOMA vehicle with a better channel condition conducts successive interference cancellation (SIC), while another vehicle decodes its signal directly treating interference as noise. Without loss of generality, since the STAR-IOS is able to adjust the energy allocation coefficients $\beta_{\mathrm{w},l}$ via the ES protocol, we allocate more energy for reflecting links, and thus, the strong receiver vehicle $u_\mathrm{r}$ operates the SIC process. This implies
in turn that the receiver vehicle in the refraction zone $u_\mathrm{t}$ is allocated more power by the transmitter vehicle (i.e., $p_\mathrm{r}<p_\mathrm{t}$). Therefore, the signal-to-interference-plus-noise ratio (SINR) of the SIC process for $u_\mathrm{r}$ can be expressed as
\begin{align}
\gamma_\mathrm{sic}=\frac{\bar{\gamma}p_\mathrm{t}\beta_{\mathrm{r}}^2\mathcal{D}d_{\mathrm{RS}}^{-\kappa}\left|\sum_{k=1}^{N_1}\sum_{l=1}^{N_2}\eta_{k,l}\mathrm{e}^{j\left(\phi_{k}+\psi_{\mathrm{r},l}-\zeta_{k,l}\right)}\right|^2}{\bar{\gamma}p_\mathrm{r}\beta_{\mathrm{r}}^2\mathcal{D}d_{\mathrm{RS}}^{-\kappa}\left|\sum_{k=1}^{N_1}\sum_{l=1}^{N_2}\eta_{k,l}\mathrm{e}^{j\left(\phi_{k}+\psi_{\mathrm{r},l}-\zeta_{k,l}\right)}\right|^2+1},\label{eq-g-sic}
\end{align}
where $\bar{\gamma}=\frac{P}{\sigma^2_n}$ is the transmit SNR. Then, with the aid of SIC, $u_\mathrm{r}$ removes the message of $u_\mathrm{t}$ from its received signal and decodes its required information with the following SNR
\begin{align}
	\gamma_\mathrm{r}^\mathrm{n}=\bar{\gamma}p_\mathrm{r}\beta_{\mathrm{r}}^2\mathcal{D}d_{\mathrm{RS}}^{-\kappa}\left|\sum_{k=1}^{N_1}\sum_{l=1}^{N_2}\eta_{k,l}\mathrm{e}^{j\left(\phi_{k}+\psi_{\mathrm{r},l}-\zeta_{k,l}\right)}\right|^2.\label{eq-gr-noma}
\end{align}
At the same time, $u_\mathrm{t}$ directly decodes its signal by considering the signal of $u_\mathrm{r}$ as interference. Thus, the SINR at $u_\mathrm{t}$ can be expressed as
\begin{align}
\gamma_\mathrm{t}^\mathrm{n}=\frac{\bar{\gamma}p_\mathrm{t}\beta_{\mathrm{t}}^2\mathcal{D}d_{\mathrm{RS}}^{-\kappa}\left|\sum_{k=1}^{N_1}\sum_{l=1}^{N_2}\eta_{k,l}\mathrm{e}^{j\left(\phi_{k}+\psi_{\mathrm{t},l}-\zeta_{k,l}\right)}\right|^2}{\bar{\gamma}p_\mathrm{r}\beta_{\mathrm{t}}^2\mathcal{D}d_{\mathrm{RS}}^{-\kappa}\left|\sum_{k=1}^{N_1}\sum_{l=1}^{N_2}\eta_{k,l}\mathrm{e}^{j\left(\phi_{k}+\psi_{\mathrm{t},l}-\zeta_{k,l}\right)}\right|^2+1}.\label{eq-gt-noma}
\end{align}
\subsubsection{OMA}
On the other hand, the SNR at user $u_\mathrm{w}$ in OMA, such as time division multiple access (TDMA), can be defined as
\begin{align}
	\gamma_\mathrm{w}^\mathrm{o}=\bar{\gamma}\mathcal{D}d_{\mathrm{RS}}^{-\kappa}\beta_\mathrm{w}^2\left|\sum_{k=1}^{N_1}\sum_{l=1}^{N_2}\eta_{k,l}\mathrm{e}^{j\left(\phi_{k}+\psi_{\mathrm{w},l}-\zeta_{k,l}\right)}\right|^2.\label{eq-gw-oma}
\end{align}
\subsubsection{Phase adjustment}
Here, by adjusting the phase at the RIS-to-STAR-IOS channel to cancel the resultant phase, i.e., $\zeta_{k,l}=\phi_k+\psi_{\mathrm{w},l}$, the SNR at receiver vehicles $u_\mathrm{w}$ can be maximized. Therefor, by defining the random variable (RV) $V=\left|\sum_{k=1}^{N_1}\sum_{l=1}^{N_2}\eta_{k,l}\right|$, the maximum SNR at $u_\mathrm{w}$ can be obtained under an ideal phase shift for both NOMA and OMA schemes.
\begin{figure}[!t]
	\centering
	\includegraphics[width=0.9\columnwidth]{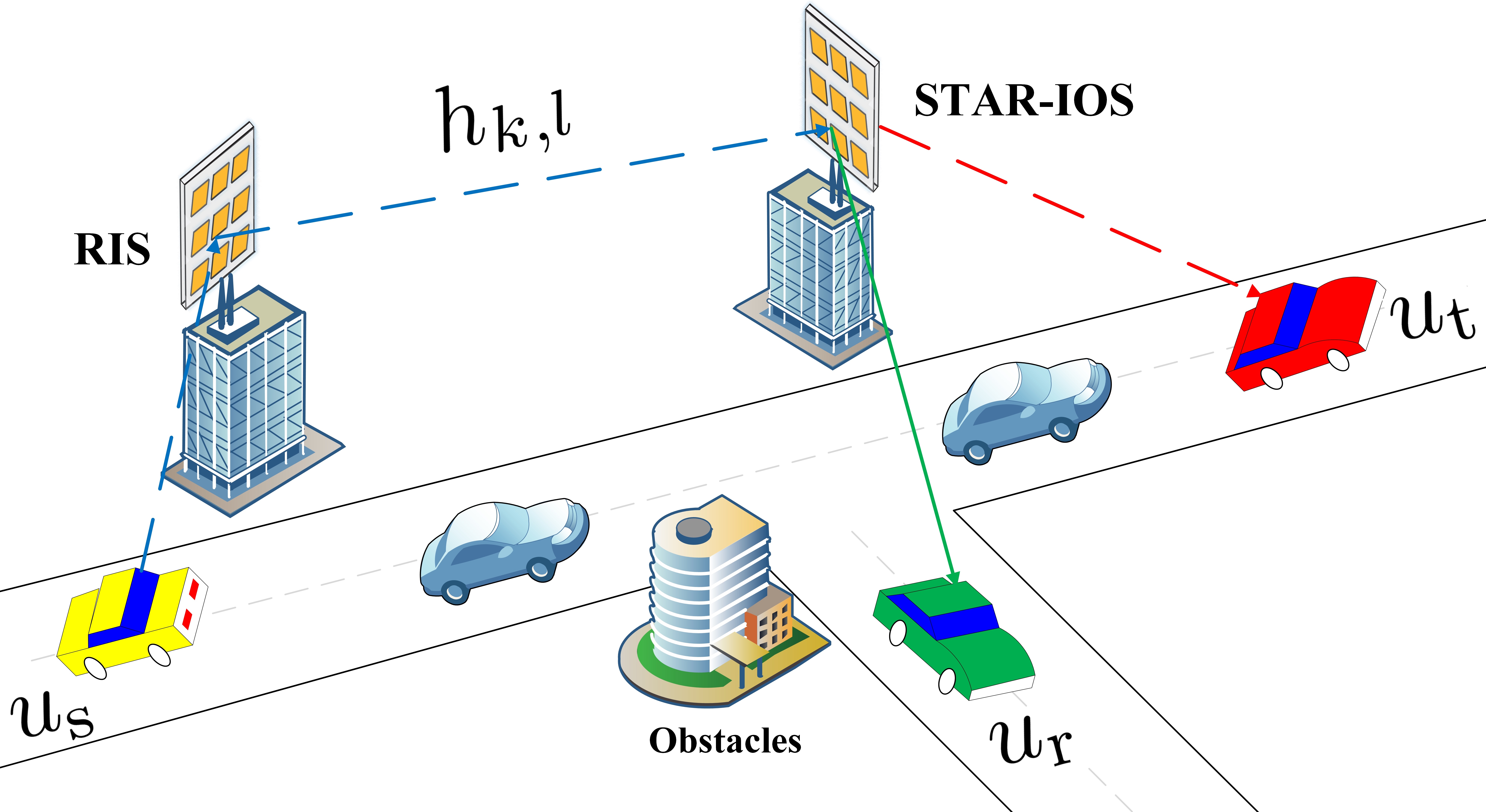}
	\caption{Illustration of RIS/STAR-IOS-aided V2V communications.}\vspace{0cm}
	\label{fig-syst}
\end{figure}
\section{Performance Analysis}
Here, we first introduce the marginal distributions of the SNRs at each users, and then, we derive the closed-form expressions of the OP, EC, and EE under Fisher-Snedecor $\mathcal{F}$ fading channels.  
\subsection{Statistical Characterization}
By exploiting the CLT when $N_1, N_2\gg 1$, the RV $V$ can be accurately approximated by a Gaussian distribution with the following marginal PDF and CDF, respectively
\begin{align}
	f_{V}\left(v\right)=\frac{1}{\sqrt{2\pi\sigma^2_\mathrm{V}}}\exp\left(-\frac{\left(v-\mu_V\right)^2}{2\sigma_\mathrm{V}^2}\right), \quad v>0\label{eq-pdf-v}
\end{align}
\begin{align}
	F_V\left(v\right)=1-\mathrm{Q}\left(\frac{v-\mu_V}{\sigma_V^2}\right), \quad v>0\label{eq-cdf-v}
\end{align}
where $Q(.)$ denotes the $Q$-function.  The terms $\sigma^2_V=\sum_{k=1}^{N_1}\sum_{l=1}^{N_2}\sigma^2_{k,l}$ and $\mu_V=\sum_{k=1}^{N_1}\sum_{l=1}^{N_2}\mu_{k,l}$ are the mean and variance of the RV $V$, respectively. Moreover, $\mu_{k,l}=\frac{m_2}{m_2-1}$ and $\sigma^2_{k,l}=\frac{m_2^2\left(m_1+m_2-1\right)}{m_1\left(m_2-1\right)^2\left(m_2-2\right)}$ are the mean and variance of the RV $\eta_{k,l}$ for all $k=\left\{1,\dots,N_1\right\}$ and $l=\left\{1,\dots,N_2\right\}$, respectively, which undergoes the Fisher-Snedecor $\mathcal{F}$ distribution.

\subsection{Outage Probability}
OP is an appropriate metric to evaluate the performance wireless communication systems, which is defined as the probability that the random SNR $\gamma$ is less than an SNR threshold $\gamma_\mathrm{th}$, i.e., $P_\mathrm{out}=\Pr\left(\gamma\leq\gamma_\mathrm{th}\right)$.
\begin{Proposition}
The OP over the considered dual RIS/STAR-IOS-aided V2V NOMA communications for the receiver vehicles $u_\mathrm{t}$ and $u_\mathrm{r}$ can be respectively given by
\begin{align}
	P_\mathrm{t,out}^\mathrm{n}=1-\mathrm{Q}\left(\frac{\breve{\gamma}_\mathrm{t}-N_1N_2\frac{m_2}{m_2-1}}{N_1N_2\frac{m_2^2\left(m_1+m_2-1\right)}{m_1\left(m_2-1\right)^2\left(m_2-2\right)}}\right),\label{eq-outt-noma}
\end{align}
\begin{align}\nonumber
	P_{\mathrm{r,out}}^\mathrm{n}=&\left[1-\mathrm{Q}\left(\frac{\hat{\gamma}_\mathrm{sic}-N_1N_2\frac{m_2}{m_2-1}}{N_1N_2\frac{m_2^2\left(m_1+m_2-1\right)}{m_1\left(m_2-1\right)^2\left(m_2-2\right)}}\right)\right]\\
	&\times\left[1-\mathrm{Q}\left(\frac{\check{\gamma}_\mathrm{r}-N_1N_2\frac{m_2}{m_2-1}}{N_1N_2\frac{m_2^2\left(m_1+m_2-1\right)}{m_1\left(m_2-1\right)^2\left(m_2-2\right)}}\right)\right],\label{eq-outr-noma}
\end{align}
where $\breve{\gamma}_\mathrm{t}=\sqrt{\frac{\bar{\gamma}_\mathrm{t}^\mathrm{n}}{\bar{\gamma}\beta_{\mathrm{t}}^2\mathcal{D}d_{\mathrm{RS}}^{-\kappa}\left(p_\mathrm{t}-p_\mathrm{r}\bar{\gamma}_\mathrm{t}^\mathrm{n}\right)}}$, $\hat{\gamma}_\mathrm{sic}=\sqrt{\frac{\bar{\gamma}_\mathrm{sic}}{\bar{\gamma}\beta_{\mathrm{r}}^2\mathcal{D}d_{\mathrm{RS}}^{-\kappa}\left(p_\mathrm{r}-p_\mathrm{t}\bar{\gamma}_\mathrm{sic}\right)}}$, and $\check{\gamma}_\mathrm{r}=\sqrt{\frac{\bar{\gamma}_\mathrm{r}^\mathrm{n}}{\beta_{\mathrm{r}}^2\bar{\gamma}p_\mathrm{r}\mathcal{D}d_{\mathrm{RS}}^{-\kappa}}}$. Besides, $\breve{\gamma}_\mathrm{t}$, $\hat{\gamma}_\mathrm{sic}$, and $\check{\gamma}_\mathrm{r}$ are the SNR thresholds of $\gamma_\mathrm{t}$, $\gamma_\mathrm{sic}$, and $\gamma_\mathrm{r}$, respectively. 
\end{Proposition}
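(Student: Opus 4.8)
\emph{Proof proposal.} The plan is to reduce every outage event to a statement about the single Gaussian random variable $V$, whose CDF $F_V$ is already in hand from \eqref{eq-cdf-v}, and then read off the probabilities directly. The key observation is that after the ideal phase alignment $\zeta_{k,l}=\phi_k+\psi_{\mathrm{w},l}$, each SINR/SNR in \eqref{eq-g-sic}--\eqref{eq-gw-oma} collapses to a strictly increasing function of $V^2$ alone, so every threshold test on an SINR is equivalent to a threshold test on $V$.

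For $u_\mathrm{t}$, I would start from $P_\mathrm{t,out}^\mathrm{n}=\Pr(\gamma_\mathrm{t}^\mathrm{n}\le\bar\gamma_\mathrm{t}^\mathrm{n})$ and substitute \eqref{eq-gt-noma} with the phases cancelled. Clearing the denominator of the SINR and collecting the $V^2$ terms turns the inequality into $V^2(p_\mathrm{t}-p_\mathrm{r}\bar\gamma_\mathrm{t}^\mathrm{n})\le \bar\gamma_\mathrm{t}^\mathrm{n}/(\bar\gamma\beta_\mathrm{t}^2\mathcal{D}d_\mathrm{RS}^{-\kappa})$, i.e. $V\le\breve\gamma_\mathrm{t}$, provided $p_\mathrm{t}>p_\mathrm{r}\bar\gamma_\mathrm{t}^\mathrm{n}$ so that the sense of the inequality is preserved (otherwise the outage event is certain). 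Then $P_\mathrm{t,out}^\mathrm{n}=F_V(\breve\gamma_\mathrm{t})$, and inserting $\mu_V=N_1N_2\,m_2/(m_2-1)$ together with $\sigma_V^2=N_1N_2\,m_2^2(m_1+m_2-1)/[m_1(m_2-1)^2(m_2-2)]$ into \eqref{eq-cdf-v} yields \eqref{eq-outt-noma}.

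For the strong vehicle $u_\mathrm{r}$, successful reception requires two decoding stages to succeed: the SIC stage that removes $u_\mathrm{t}$'s message (governed by $\gamma_\mathrm{sic}$ in \eqref{eq-g-sic}) and the subsequent decoding of its own message (governed by $\gamma_\mathrm{r}^\mathrm{n}$ in \eqref{eq-gr-noma}). Applying the same clear-and-collect inversion to each expression gives the two equivalent events $V\le\hat\gamma_\mathrm{sic}$ and $V\le\check\gamma_\mathrm{r}$, so the individual outage probabilities are $F_V(\hat\gamma_\mathrm{sic})$ and $F_V(\check\gamma_\mathrm{r})$; substituting the moments into \eqref{eq-cdf-v} produces the two bracketed factors in \eqref{eq-outr-noma}, and writing the overall outage as their product completes the derivation.

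The main obstacle, and the step I would treat with care, is the combination of the two decoding events for $u_\mathrm{r}$. Because $\gamma_\mathrm{sic}$ and $\gamma_\mathrm{r}^\mathrm{n}$ are both monotone functions of the \emph{same} $V$, the exact joint success event is $\{V>\max(\hat\gamma_\mathrm{sic},\check\gamma_\mathrm{r})\}$, whose probability is a single $F_V$ evaluated at the larger threshold rather than a product of two CDFs. The product form in \eqref{eq-outr-noma} therefore corresponds to treating the SIC and self-decoding outages as independent events; I would flag this explicitly as the modelling assumption under which the stated closed form holds, and confirm its validity against the Monte-Carlo curves as the paper promises.
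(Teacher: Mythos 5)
Your proposal follows essentially the same route as the paper: invert each SINR into a threshold test on $V$ (with the caveat that the outage is certain when $p_\mathrm{t}\le p_\mathrm{r}\bar\gamma_\mathrm{t}^\mathrm{n}$, resp.\ $p_\mathrm{r}\le p_\mathrm{t}\bar\gamma_\mathrm{sic}$ for the SIC stage), evaluate $F_V$ from \eqref{eq-cdf-v}, and substitute $\mu_V$ and $\sigma_V^2$. The one step you flag as delicate is in fact the weak point of the paper's own argument: the paper factors $\Pr\left(\gamma_\mathrm{sic}\le\bar\gamma_\mathrm{sic},\,\gamma_\mathrm{r}^\mathrm{n}\le\bar\gamma_\mathrm{r}^\mathrm{n}\right)$ into a product by invoking ``independence of the events,'' but both events are deterministic functions of the same $V$, so they are maximally dependent; the exact value of the joint event as written is $F_V\!\left(\min\left(\hat\gamma_\mathrm{sic},\check\gamma_\mathrm{r}\right)\right)$, a single CDF evaluation rather than a product. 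Your instinct to treat the product form as an extra modelling assumption (to be checked against Monte-Carlo) is the honest reading; the paper asserts the factorization without justification, so on this point your write-up is more careful than the published proof.
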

\begin{proof}
Regarding the principle of NOMA, the outage occurs at $u_\mathrm{r}$ when it cannot decode the signal of $u_\mathrm{t}$ or its own signal,
or both. Hence, the OP of $u_\mathrm{r}$ is defined as
\begin{align}
	P_\mathrm{r,out}^\mathrm{n}=\Pr\left(\gamma_\mathrm{sic}\leq \bar{\gamma}_\mathrm{sic},\gamma_\mathrm{r}^\mathrm{n}\leq\bar{\gamma}_\mathrm{r}^\mathrm{n}\right),\label{eqp-outr-def}
	\end{align}
where $\bar{\gamma}_\mathrm{sic}$ and $\bar{\gamma}_\mathrm{r}^\mathrm{n}$ are the corresponding SINR/SNR thresholds. In this case, it should be noted that the OP highly depends on the relation between power allocation factors $p_\mathrm{r}$ and $p_\mathrm{t}$. Hence, when $p_\mathrm{t}\leq{\bar{\gamma}_\mathrm{sic}}p_\mathrm{r}$, we will have $P_\mathrm{r,out}^\mathrm{n}=1$; Otherwise, when $p_\mathrm{t}>{\bar{\gamma}_\mathrm{sic}}p_\mathrm{r}$, by considering the ideal phase shift to reach the maximum received signal, and then, inserting \eqref{eq-g-sic} and \eqref{eq-gr-noma} into \eqref{eqp-outr-def}, the OP of $u_\mathrm{r}$ can be given by
\begin{align}\nonumber
	&P_\mathrm{r,out}^\mathrm{n}\overset{(a)}{=}\Pr\left(\frac{\bar{\gamma}p_\mathrm{r}\mathcal{D}d_{\mathrm{RS}}^{-\kappa}\beta_{\mathrm{r}}^2V^2}{\bar{\gamma}p_\mathrm{t}\mathcal{D}d_{\mathrm{RS}}^{-\kappa}\beta_{\mathrm{r}}^2V^2+1}\leq \bar{\gamma}_\mathrm{sic}\right)\\
	&\quad\quad\quad\times\Pr\left(\bar{\gamma}p_\mathrm{r}\mathcal{D}d_{\mathrm{RS}}^{-\kappa}\beta_{\mathrm{r}}^2V^2\leq \bar{\gamma}_\mathrm{r}^\mathrm{n}\right)\\\nonumber
	&=\Pr\left(V\leq\sqrt{\frac{\bar{\gamma}_\mathrm{sic}}{\bar{\gamma}\beta_{\mathrm{r}}^2\mathcal{D}d_{\mathrm{RS}}^{-\kappa}\left(p_\mathrm{r}-p_\mathrm{t}\bar{\gamma}_\mathrm{sic}\right)}}\right)\\
	&\quad\quad\quad\times\Pr\left(V\leq\sqrt{\frac{\bar{\gamma}_\mathrm{r}^\mathrm{n}}{\bar{\gamma}p_\mathrm{r}\mathcal{D}d_{\mathrm{RS}}^{-\kappa}\beta_{\mathrm{r}}^2}}\right)\\
	&=F_V\left(\hat{\gamma}_\mathrm{sic}\right)F_V\left(\check{\gamma}_\mathrm{r}\right)\\
	&=\left[1-\mathrm{Q}\left(\frac{\hat{\gamma}_\mathrm{sic}-\mu_V}{\sigma_V^2}\right)\right]\left[1-\mathrm{Q}\left(\frac{\check{\gamma}_\mathrm{r}-\mu_V}{\sigma_V^2}\right)\right],
\end{align}
where $(a)$ is obtained from the independence of the events. Next, by substituting the values of $\mu_V$ and $\sigma_V^2$ that previously obtained, the proof of $P_\mathrm{r,out}^\mathrm{n}$ is accomplished. 

As for the proof of $P_\mathrm{t,out}^\mathrm{n}$, the outage occurs at $u_\mathrm{t}$ when its received SINR $\gamma_\mathrm{t}^\mathrm{n}$ falls below a SINR threshold $\bar{\gamma}_\mathrm{t}^\mathrm{n}$, i.e., 
\begin{align}
	P_\mathrm{t,out}^\mathrm{n}=\Pr\left(\gamma_\mathrm{t}^\mathrm{n}\leq\bar{\gamma}_\mathrm{t}^\mathrm{n}\right).\label{eqp-outt-def}
	\end{align}
Now, following the similar procedures as in the analysis of $u_\mathrm{r}$, under the case of $p_\mathrm{t}\leq{\bar{\gamma}_\mathrm{t}}^\mathrm{n}p_\mathrm{r}$, we will have $P_\mathrm{t,out}^\mathrm{n}=1$. However, when $p_\mathrm{t}>{\bar{\gamma}_\mathrm{t}}^\mathrm{n}p_\mathrm{r}$, by inserting \eqref{eq-gt-noma} into \eqref{eqp-outt-def}, the OP of $u_\mathrm{t}$ can be given by
\begin{align}
	P_\mathrm{t,out}^\mathrm{n}&=\Pr\left(\frac{\bar{\gamma}p_\mathrm{t}\mathcal{D}d_{\mathrm{RS}}^{-\kappa}\beta_{\mathrm{t}}^2V^2}{\bar{\gamma}p_\mathrm{r}\mathcal{D}d_{\mathrm{RS}}^{-\kappa}\beta_{\mathrm{t}}^2V^2+1}\leq \bar{\gamma}_\mathrm{t}^\mathrm{n}\right)\\
	&=\Pr\left(V\leq\sqrt{\frac{\bar{\gamma}_\mathrm{t}^\mathrm{n}}{\bar{\gamma}\beta_{\mathrm{t}}^2\mathcal{D}d_{\mathrm{RS}}^{-\kappa}\left(p_\mathrm{t}-p_\mathrm{r}\bar{\gamma}_\mathrm{t}^\mathrm{n}\right)}}\right)\\
	&=F_V\left(\breve{\gamma}_\mathrm{t}\right)=1-\mathrm{Q}\left(\frac{\breve{\gamma}_\mathrm{w}-\mu_V}{\sigma_V^2}\right),
\end{align}
then, by substituting the values of $\mu_V$ and $\sigma_V^2$ that previously obtained, the proof of $P_\mathrm{t,out}^\mathrm{n}$ is completed. 
\end{proof}
\begin{Proposition}
	The OP over the considered dual RIS/STAR-IOS-aided V2V OMA communications for the receiver vehicles $u_\mathrm{w}$ can be given by
		\begin{align}
		P_\mathrm{w,out}^\mathrm{o}=1-\mathrm{Q}\left(\frac{\tilde{\gamma}_\mathrm{w}-N_1N_2\frac{m_2}{m_2-1}}{N_1N_2\frac{m_2^2\left(m_1+m_2-1\right)}{m_1\left(m_2-1\right)^2\left(m_2-2\right)}}\right),
	\end{align}
where $\tilde{\gamma}_\mathrm{w}=\sqrt{\frac{\bar{\gamma}_\mathrm{w}^\mathrm{o}}{\bar{\gamma}\mathcal{D}d_{\mathrm{RS}}^{-\kappa}\beta_{\mathrm{w}}^2}}$ and $\bar{\gamma}_\mathrm{w}^\mathrm{o}$ denotes the SNR threshold of $u_\mathrm{w}$.
	\end{Proposition}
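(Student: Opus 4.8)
The plan is to follow exactly the template used in the proof of Proposition 1, noting that the OMA case is structurally simpler because there is no superposition coding and hence no interference term to carry through the algebra. First I would invoke the ideal phase-adjustment condition $\zeta_{k,l}=\phi_k+\psi_{\mathrm{w},l}$ from the phase-adjustment paragraph, which cancels the resultant phase in the complex double sum of \eqref{eq-gw-oma} and collapses it to the real nonnegative random variable $V=\left|\sum_{k=1}^{N_1}\sum_{l=1}^{N_2}\eta_{k,l}\right|$. This reduces the OMA SNR to the monomial form $\gamma_\mathrm{w}^\mathrm{o}=\bar{\gamma}\mathcal{D}d_{\mathrm{RS}}^{-\kappa}\beta_\mathrm{w}^2 V^2$.

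Next I would write the outage definition $P_\mathrm{w,out}^\mathrm{o}=\Pr\left(\gamma_\mathrm{w}^\mathrm{o}\leq\bar{\gamma}_\mathrm{w}^\mathrm{o}\right)$ and substitute the simplified SNR. Since $\bar{\gamma}$, $\mathcal{D}$, $d_{\mathrm{RS}}^{-\kappa}$, and $\beta_\mathrm{w}^2$ are all positive deterministic constants, the event $\gamma_\mathrm{w}^\mathrm{o}\leq\bar{\gamma}_\mathrm{w}^\mathrm{o}$ is equivalent to $V^2\leq \bar{\gamma}_\mathrm{w}^\mathrm{o}/(\bar{\gamma}\mathcal{D}d_{\mathrm{RS}}^{-\kappa}\beta_\mathrm{w}^2)$. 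Here I would observe explicitly that $V\geq 0$ as an absolute value, so the square root may be taken monotonically with no sign ambiguity, giving $\Pr\left(V\leq\tilde{\gamma}_\mathrm{w}\right)=F_V\left(\tilde{\gamma}_\mathrm{w}\right)$ with $\tilde{\gamma}_\mathrm{w}$ exactly as defined in the statement. Finally I would invoke the CLT-based CDF \eqref{eq-cdf-v}, namely $F_V(v)=1-\mathrm{Q}\!\left((v-\mu_V)/\sigma_V^2\right)$, evaluated at $v=\tilde{\gamma}_\mathrm{w}$, and substitute the closed forms $\mu_V=N_1N_2\frac{m_2}{m_2-1}$ and $\sigma_V^2=N_1N_2\frac{m_2^2(m_1+m_2-1)}{m_1(m_2-1)^2(m_2-2)}$ to recover the claimed expression.

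In contrast to the NOMA derivation, I would emphasize that there is no case split on the power-allocation factors: the OMA SNR has no interference denominator, so there is no feasibility constraint of the type $p_\mathrm{t}>\bar{\gamma}_\mathrm{sic}p_\mathrm{r}$ and no degenerate branch where the OP equals one. Consequently the whole argument is a single substitution into $F_V$, and I do not expect any genuine obstacle. The only points requiring care—rather than difficulty—are confirming that the positivity of $V$ legitimizes the square-root step and that \eqref{eq-cdf-v} is being applied to $V$ itself (the CLT-aggregated sum) and not to the individual $\eta_{k,l}$; once these are noted, the result follows immediately.
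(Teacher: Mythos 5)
Your proposal is correct and follows essentially the same route as the paper's own proof: apply the ideal phase shift to reduce $\gamma_\mathrm{w}^\mathrm{o}$ to $\bar{\gamma}\mathcal{D}d_{\mathrm{RS}}^{-\kappa}\beta_\mathrm{w}^2V^2$, invert the monotone map to get $\Pr(V\leq\tilde{\gamma}_\mathrm{w})=F_V(\tilde{\gamma}_\mathrm{w})$, and substitute the CLT-based CDF with $\mu_V$ and $\sigma_V^2$. Your added remarks on the nonnegativity of $V$ and the absence of a power-allocation case split are sound observations that the paper leaves implicit.
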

\begin{proof}
As for the OMA scheme, the OP of receiver vehicles $u_\mathrm{w}$ can be defined as the probability that $\gamma_\mathrm{w}^\mathrm{o}$ less than the SNR threshold $\bar{\gamma}_\mathrm{w}^\mathrm{o}$, i.e., 
\begin{align}
	P_\mathrm{w,out}^\mathrm{o}=\Pr\left(\gamma_\mathrm{w}\leq \bar{\gamma}_\mathrm{w}^\mathrm{o}\right).\label{eqp-outw-def}
	\end{align}
Next, by plugging \eqref{eq-gw-oma} into \eqref{eqp-outw-def} and considering the ideal phase shift, we have
\begin{align}
	P_\mathrm{w,out}^\mathrm{o}&=\Pr\left(V\leq\sqrt{\frac{\bar{\gamma}_\mathrm{w}^\mathrm{o}}{\bar{\gamma}\mathcal{D}d_{\mathrm{RS}}^{-\kappa}\beta_{\mathrm{w}}^2}}\right)\\
	&=F_V\left(\tilde{\gamma}_\mathrm{w}\right)=1-\mathrm{Q}\left(\frac{\tilde{\gamma}_\mathrm{w}-\mu_V}{\sigma_V^2}\right).
\end{align}
Now, by substituting the values of $\mu_V$ and $\sigma_V^2$ that previously obtained, the proof of $P_\mathrm{w,out}^\mathrm{o}$ is completed. 
\end{proof}
\subsection{Ergodic Capacity}
The EC for the considered system model under NOMA/OMA schemes with the instantaneous SNR $\gamma_\mathrm{w}^\mathrm{z}$, $\mathrm{z}\in\{\mathrm{n,o}\}$,  can be defined as
\begin{align}
\hspace{-0.4cm}\bar{C}_\mathrm{w}^\mathrm{z}&=\mathbb{E}\left[\log_2\left(1+\gamma_\mathrm{w}^\mathrm{z}\right)\right]=\int_0^\infty \log_2\left(1+\gamma_\mathrm{w}^\mathrm{z}\right)f_V\left(v\right)dv.\label{eq-c-def}
\end{align}
In general, it is mathematically intractable to derive the closed-form expression of $\bar{C}_\mathrm{w}^\mathrm{z}$. Hence, by exploiting the Jensen's inequality, we derive tight upper bound $\bar{C}_\mathrm{w,U}^\mathrm{z}$ 
 for the EC, i.e., 
 $\bar{C}_\mathrm{w}^\mathrm{z}\leq \bar{C}_\mathrm{w,U}^\mathrm{z}$. 
\begin{Proposition}
The upper bound EC over the considered dual RIS/STAR-IOS-aided V2V NOMA communications for the receiver vehicles $u_\mathrm{r}$ and $u_\mathrm{t}$ can be respectively given by \eqref{eq-cr-noma} and \eqref{eq-ct-noma}.
\begin{figure*}[t]
	\normalsize
	\setcounter{equation}{24}
	\normalsize\begin{align}
		\bar{C}_\mathrm{r,U}^\mathrm{n}=\log_2\left(1+\bar{\gamma}p_\mathrm{r}\mathcal{D}d_{\mathrm{RS}}^{-\kappa}\beta_{\mathrm{r}}^2N_1N_2\left[\frac{m_2}{m_2-1}+\frac{m_2^2\left(m_1+m_2-1\right)}{m_1\left(m_2-1\right)^2\left(m_2-2\right)}\right]\right).\label{eq-cr-noma}
	\end{align}
	\hrulefill\vspace{0cm}
\end{figure*}
\begin{figure*}[t]
	\normalsize
	\setcounter{equation}{25}
	\normalsize
	\begin{align}
		\bar{C}_\mathrm{t,U}^\mathrm{n}=\log_2\left(1+\frac{\bar{\gamma}p_\mathrm{t}\mathcal{D}d_{\mathrm{RS}}^{-\kappa}\beta_{\mathrm{t}}^2N_1N_2\left[\frac{m_2}{m_2-1}+\frac{m_2^2\left(m_1+m_2-1\right)}{m_1\left(m_2-1\right)^2\left(m_2-2\right)}\right]}{\bar{\gamma}p_\mathrm{r}\mathcal{D}d_{\mathrm{RS}}^{-\kappa}\beta_{\mathrm{t}}^2N_1N_2\left[\frac{m_2}{m_2-1}+\frac{m_2^2\left(m_1+m_2-1\right)}{m_1\left(m_2-1\right)^2\left(m_2-2\right)}\right]+1}\right).\label{eq-ct-noma}
	\end{align}
	\hrulefill\vspace{0cm}
\end{figure*}
\end{Proposition}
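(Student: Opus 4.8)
The plan is to bound the ergodic capacity in \eqref{eq-c-def} by moving the expectation inside the logarithm. Since $x\mapsto\log_2(1+x)$ is concave on $x>0$, Jensen's inequality gives $\mathbb{E}\!\left[\log_2(1+\gamma_\mathrm{w}^\mathrm{z})\right]\le\log_2\!\left(1+\mathbb{E}[\gamma_\mathrm{w}^\mathrm{z}]\right)$, which is exactly the upper bound $\bar{C}_\mathrm{w,U}^\mathrm{z}$ we are after. Both displayed results then reduce to evaluating a single expectation, namely the second moment $\mathbb{E}[V^2]$ of the CLT variable $V$, using the Gaussian statistics in \eqref{eq-pdf-v} together with the per-element moments $\mu_{k,l}=\tfrac{m_2}{m_2-1}$ and $\sigma^2_{k,l}$ recorded earlier.

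For $u_\mathrm{r}$, after the ideal phase alignment the SNR in \eqref{eq-gr-noma} is linear in $V^2$, i.e. $\gamma_\mathrm{r}^\mathrm{n}=\bar{\gamma}p_\mathrm{r}\beta_{\mathrm{r}}^2\mathcal{D}d_{\mathrm{RS}}^{-\kappa}V^2$. Hence the deterministic prefactor pulls out of the expectation and the Jensen bound becomes $\log_2\!\bigl(1+\bar{\gamma}p_\mathrm{r}\beta_{\mathrm{r}}^2\mathcal{D}d_{\mathrm{RS}}^{-\kappa}\,\mathbb{E}[V^2]\bigr)$. I would express $\mathbb{E}[V^2]$ through the first and second moments of $V$ supplied by the CLT statistics and then substitute $\mu_V=N_1N_2\mu_{k,l}$ and $\sigma_V^2=N_1N_2\sigma^2_{k,l}$; the resulting mean-plus-variance combination is the bracketed factor $N_1N_2[\cdots]$ appearing in \eqref{eq-cr-noma}.

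The $u_\mathrm{t}$ case is the one requiring care, because the SINR in \eqref{eq-gt-noma} is a \emph{ratio} in $V^2$, $\gamma_\mathrm{t}^\mathrm{n}=\frac{aV^2}{bV^2+1}$ with $a=\bar{\gamma}p_\mathrm{t}\beta_{\mathrm{t}}^2\mathcal{D}d_{\mathrm{RS}}^{-\kappa}$ and $b=\bar{\gamma}p_\mathrm{r}\beta_{\mathrm{t}}^2\mathcal{D}d_{\mathrm{RS}}^{-\kappa}$, so applying Jensen to the logarithm alone does not by itself produce the closed form in \eqref{eq-ct-noma}. My plan is a two-stage argument: first note that $x\mapsto \frac{ax}{bx+1}$ is increasing and concave for $a,b>0$ (its second derivative $-2ab/(bx+1)^3$ is negative on $x>0$), so by Jensen $\mathbb{E}[\gamma_\mathrm{t}^\mathrm{n}]\le \frac{a\,\mathbb{E}[V^2]}{b\,\mathbb{E}[V^2]+1}$; then, since $\log_2(1+\cdot)$ is increasing and concave, composing preserves the inequality direction and yields the bound with $V^2$ replaced by $\mathbb{E}[V^2]$ in both numerator and denominator. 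Substituting the same $\mathbb{E}[V^2]$ as above then gives \eqref{eq-ct-noma}.

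The main obstacle is precisely this ratio step for $u_\mathrm{t}$: one must verify that replacing $V^2$ by its mean inside the SINR is a genuine upper bound rather than an ad hoc substitution, which is what the concavity-plus-monotone-composition argument secures. By contrast, the moment evaluation is routine once the statistics of \eqref{eq-pdf-v} are in hand, and the tightness claimed for the bound rests on the concentration of $V$ when $N_1N_2\gg1$.
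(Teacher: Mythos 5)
Your proposal follows essentially the same route as the paper's proof: apply Jensen's inequality to the concave $\log_2(1+\cdot)$ in \eqref{eq-c-def}, pull the deterministic prefactors out, and reduce both bounds to the single moment $\mathbb{E}[V^2]$ computed from the CLT statistics of $V$. The one place you genuinely go beyond the paper is the $u_\mathrm{t}$ case: the paper simply writes down \eqref{eqp-ct-e2} with $V^2$ replaced by $\mathbb{E}[V^2]$ inside the SINR ratio of \eqref{eq-gt-noma} and attributes it to Jensen, whereas you justify this by noting that $x\mapsto\frac{ax}{bx+1}$ is increasing and concave on $x>0$, so its composition with the increasing concave $\log_2(1+\cdot)$ is itself concave and a single application of Jensen legitimately yields $\mathbb{E}\bigl[\log_2(1+\gamma_\mathrm{t}^\mathrm{n})\bigr]\le\log_2\bigl(1+\frac{a\,\mathbb{E}[V^2]}{b\,\mathbb{E}[V^2]+1}\bigr)$. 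That is exactly the missing step in the paper's argument, and it is worth making explicit.

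One loose end in your write-up, which you describe as a ``mean-plus-variance combination'': the correct second moment is $\mathbb{E}[V^2]=\sigma_V^2+\mu_V^2=N_1N_2\sigma^2_{k,l}+\bigl(N_1N_2\mu_{k,l}\bigr)^2$, i.e.\ the mean enters \emph{squared}, carrying an extra factor $N_1N_2$; this is what the paper's own intermediate step \eqref{eqp-e2} records. The bracketed quantity in \eqref{eq-cr-noma}--\eqref{eq-ct-noma} is instead $N_1N_2\bigl[\mu_{k,l}+\sigma^2_{k,l}\bigr]=\mu_V+\sigma_V^2$, which is not $\mathbb{E}[V^2]$. So your final substitution does not literally land on the displayed formulas --- though, to be fair, neither does the paper's, since its proposition statement is inconsistent with its own evaluation of $\mathbb{E}[V^2]$ in \eqref{eqp-e2}. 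You should carry the substitution through explicitly rather than asserting that it reproduces the stated bracket.
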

\begin{proof}
By inserting \eqref{eq-gr-noma} and \eqref{eq-pdf-v} into \eqref{eq-c-def} and applying the Jensen's inequality, the upper bound  EC of $u_\mathrm{r}$ under NOMA scheme can be defined as
\begin{align}
\bar{C}_\mathrm{r,U}^\mathrm{n}=\log_2\left(1+\bar{\gamma}p_\mathrm{r}\mathcal{D}d_{\mathrm{RS}}^{-\kappa}\beta_{\mathrm{r}}^2\mathbb{E}\left[V^2\right]\right),\label{eqp-c-e2}
\end{align}
in which $\mathbb{E}\left[V^2\right]$ can be determined as 
\begin{align}
&\mathbb{E}\left[V^2\right]=\text{Var}\left[V\right]+\mathbb{E}^2\left[V\right]=\sigma_V^2+\mu_V^2.\\
&=N_1N_2\left[\frac{N_1N_2m_2^2}{\left(m_2-1\right)^2}+\frac{m_2^2\left(m_1+m_2-1\right)}{m_1\left(m_2-1\right)^2\left(m_2-2\right)}\right].\label{eqp-e2}
\end{align}
Now, by substituting the value of $\mathbb{E}\left[V^2\right]$ into \eqref{eqp-c-e2}, $\bar{C}_\mathrm{r,U}^\mathrm{o}$ is derived as \eqref{eq-cr-noma}.

Similarly, by plugging \eqref{eq-gt-noma} and \eqref{eq-pdf-v} into \eqref{eq-c-def} and considering the Jensen's inequality, the upper bound EC of $u_\mathrm{t}$ under NOMA case can be expressed as 
\begin{align}
	\bar{C}_\mathrm{t,U}^\mathrm{n}=\log_2\left(1+\frac{\bar{\gamma}p_\mathrm{t}\mathcal{D}d_{\mathrm{RS}}^{-\kappa}\beta_{\mathrm{t}}^2\mathbb{E}\left[V^2\right]}{\bar{\gamma}p_\mathrm{r}\mathcal{D}d_{\mathrm{RS}}^{-\kappa}\beta_{\mathrm{t}}^2\mathbb{E}\left[V^2\right]+1}\right),\label{eqp-ct-e2}
\end{align}
in which by inserting the values of  $\mathbb{E}\left[V^2\right]$ from \eqref{eqp-e2} into \eqref{eqp-ct-e2}, the proof is completed. 
\end{proof}
\begin{Proposition}
The upper bound EC over the considered dual RIS/STAR-IOS-aided V2V OMA communications for the receiver vehicles $u_\mathrm{w}$ can be given by \eqref{eq-cw-oma}.
\begin{figure*}[t]
	\normalsize
	\setcounter{equation}{30}
	\normalsize
	\begin{align}
		\bar{C}_\mathrm{w,U}^\mathrm{o}=\log_2\left(1+\bar{\gamma}\mathcal{D}d_{\mathrm{RS}}^{-\kappa}\beta_\mathrm{w}^2N_1N_2\left[\frac{m_2}{m_2-1}+\frac{m_2^2\left(m_1+m_2-1\right)}{m_1\left(m_2-1\right)^2\left(m_2-2\right)}\right]\right).\label{eq-cw-oma}
	\end{align}
	\hrulefill\vspace{0cm}
\end{figure*}
\end{Proposition}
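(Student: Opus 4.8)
The plan is to mirror the NOMA derivation for $u_\mathrm{r}$, since the OMA SNR in \eqref{eq-gw-oma} has exactly the same structural form as \eqref{eq-gr-noma}: after invoking the ideal phase-shift alignment $\zeta_{k,l}=\phi_k+\psi_{\mathrm{w},l}$, the modulus of the double sum collapses to the random variable $V=\left|\sum_{k=1}^{N_1}\sum_{l=1}^{N_2}\eta_{k,l}\right|$, so that $\gamma_\mathrm{w}^\mathrm{o}=\bar{\gamma}\mathcal{D}d_{\mathrm{RS}}^{-\kappa}\beta_\mathrm{w}^2 V^2$. First I would substitute this expression together with the Gaussian PDF \eqref{eq-pdf-v} into the capacity definition \eqref{eq-c-def}, obtaining $\bar{C}_\mathrm{w}^\mathrm{o}=\mathbb{E}\left[\log_2\left(1+\bar{\gamma}\mathcal{D}d_{\mathrm{RS}}^{-\kappa}\beta_\mathrm{w}^2 V^2\right)\right]$.

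Next, since the exact expectation of a logarithm against a Gaussian density is not available in closed form, I would invoke Jensen's inequality. Because $x\mapsto\log_2(1+x)$ is concave on $[0,\infty)$, the expectation of the log is upper bounded by the log of the expectation, which moves the averaging inside the argument and leaves only the deterministic multiplicative constants outside $V^2$. This yields $\bar{C}_\mathrm{w,U}^\mathrm{o}=\log_2\left(1+\bar{\gamma}\mathcal{D}d_{\mathrm{RS}}^{-\kappa}\beta_\mathrm{w}^2\,\mathbb{E}\left[V^2\right]\right)$, which is the OMA counterpart of \eqref{eqp-c-e2}.

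The remaining step is to evaluate the second moment $\mathbb{E}\left[V^2\right]$. Since $V$ has a strictly positive mean (it is a sum of positive-mean amplitude variables), I would use $\mathbb{E}\left[V^2\right]=\mathrm{Var}\left[V\right]+\mathbb{E}^2\left[V\right]=\sigma_V^2+\mu_V^2$ and insert the already-derived Gaussian parameters $\mu_V=N_1N_2\frac{m_2}{m_2-1}$ and $\sigma_V^2=N_1N_2\frac{m_2^2\left(m_1+m_2-1\right)}{m_1\left(m_2-1\right)^2\left(m_2-2\right)}$ from the statistical characterization, reusing the value already computed in \eqref{eqp-e2}. Substituting this into the bounded expression and factoring out $N_1N_2$ then delivers \eqref{eq-cw-oma}.

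Since every ingredient — the phase-alignment reduction to $V$, the Gaussian moments, and the second-moment identity — is established earlier in the excerpt, the argument is essentially mechanical and there is no genuinely hard step. The only point demanding care is the bookkeeping of the $N_1N_2$ scalings, because the mean-square contribution $\mu_V^2$ grows like $(N_1N_2)^2$ whereas the variance $\sigma_V^2$ grows only like $N_1N_2$; keeping these terms separate is what makes the substitution match the claimed form. Conceptually, the one substantive matter is justifying the direction of the bound: concavity guarantees $\bar{C}_\mathrm{w}^\mathrm{o}\leq\bar{C}_\mathrm{w,U}^\mathrm{o}$, and its tightness — which licenses reporting it as an upper bound rather than a loose estimate — is what I would later corroborate against the Monte-Carlo simulation.
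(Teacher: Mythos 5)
Your route is the same as the paper's: substitute the phase-aligned OMA SNR $\gamma_\mathrm{w}^\mathrm{o}=\bar{\gamma}\mathcal{D}d_{\mathrm{RS}}^{-\kappa}\beta_\mathrm{w}^2V^2$ into \eqref{eq-c-def}, pull the expectation inside the concave $\log_2(1+x)$ via Jensen to obtain $\bar{C}_\mathrm{w,U}^\mathrm{o}=\log_2\left(1+\bar{\gamma}\mathcal{D}d_{\mathrm{RS}}^{-\kappa}\beta_\mathrm{w}^2\mathbb{E}\left[V^2\right]\right)$, and evaluate $\mathbb{E}\left[V^2\right]=\sigma_V^2+\mu_V^2$. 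Up to that point your argument coincides with the paper's proof of \eqref{eqp-cw-e2}.

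The last step, however, does not deliver what you claim. With $\mu_V=N_1N_2\frac{m_2}{m_2-1}$ you have $\mu_V^2=\left(N_1N_2\right)^2\frac{m_2^2}{\left(m_2-1\right)^2}$, so
\[
\mathbb{E}\left[V^2\right]=N_1N_2\left[\frac{N_1N_2m_2^2}{\left(m_2-1\right)^2}+\frac{m_2^2\left(m_1+m_2-1\right)}{m_1\left(m_2-1\right)^2\left(m_2-2\right)}\right],
\]
which is the paper's \eqref{eqp-e2}. This is \emph{not} the bracket appearing in \eqref{eq-cw-oma}, whose first term is $\frac{m_2}{m_2-1}$; that bracket equals $\mu_{k,l}+\sigma_{k,l}^2$, so the argument of the logarithm there is $\mu_V+\sigma_V^2$ rather than $\mu_V^2+\sigma_V^2$. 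Hence ``factoring out $N_1N_2$'' does not reproduce \eqref{eq-cw-oma}; the dominant terms differ by a factor of order $N_1N_2$. You yourself observe that $\mu_V^2$ scales as $\left(N_1N_2\right)^2$ while $\sigma_V^2$ scales as $N_1N_2$ --- and that observation is exactly what shows the displayed formula, in which both terms scale linearly in $N_1N_2$, cannot equal $\log_2\left(1+\bar{\gamma}\mathcal{D}d_{\mathrm{RS}}^{-\kappa}\beta_\mathrm{w}^2\mathbb{E}\left[V^2\right]\right)$. The inconsistency originates in the paper (its \eqref{eqp-e2} and its \eqref{eq-cw-oma} disagree), but your write-up asserts the final equality instead of flagging the mismatch, so as written the concluding step fails.
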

\begin{proof}
By inserting \eqref{eq-gw-oma} and \eqref{eq-pdf-v} into \eqref{eq-c-def} and applying the Jensen's inequality, the upper bound EC of $u_\mathrm{w}$ under OMA scheme can be defined as
\begin{align}
	\bar{C}_\mathrm{w,U}^\mathrm{o}=\log_2\left(1+\bar{\gamma}\mathcal{D}d_{\mathrm{RS}}^{-\kappa}\beta_\mathrm{w}^2\mathbb{E}\left[V^2\right]\right).\label{eqp-cw-e2}
\end{align}
Now, by substituting \eqref{eqp-e2} into \eqref{eqp-cw-e2} the proof is accomplished. 
\end{proof}\vspace{-1cm}
\subsection{Energy Efficiency}
EE is a vital performance metric due to limited resources in wireless communication networks. In this regard, the EE for the considered dual RIS/STAR-IOS-aided NOMA/OMA communications is defined as the ratio of the sum of EC at $u_\mathrm{w}$ to the corresponding total power consumption $P_\mathrm{tot}$, i.e.,
\begin{align}
\mathcal{E}^\mathrm{z}&=\frac{\bar{C}_\mathrm{t}^\mathrm{z}+\bar{C}_\mathrm{r}^\mathrm{z}}{P_\mathrm{tot}}=\frac{\bar{C}_\mathrm{t}^\mathrm{z}+\bar{C}_\mathrm{r}^\mathrm{z}}{P/\alpha+N_1P_\mathrm{R}+N_2P_\mathrm{S}+P_\mathrm{t}+P_\mathrm{r}},\label{eq-ee-z}
\end{align}
where $P/\alpha$ is the dynamic power consumption at transmitter vehicle $u_\mathrm{s}$ in which $\alpha$ indicates the drain efficiency of high-power amplifier (HPA). The terms $P_\mathrm{R}$ and $P_\mathrm{S}$ denote the power consumed by each element of the RIS and STAR-IOS, respectively. Besides, $P_\mathrm{t}$ and $P_\mathrm{r}$ are the circuit power consumption at receiver vehicles $u_\mathrm{t}$ and $u_\mathrm{r}$, respectively.

Here, by applying the obtained ergodic capacity of NOMA and OMA schemes to \eqref{eq-ee-z}, the corresponding EE can be determined.\vspace{-0.5cm}
\begin{figure*}\hspace{-1cm}
	\subfigure[]{%
		\includegraphics[width=0.28\textwidth]{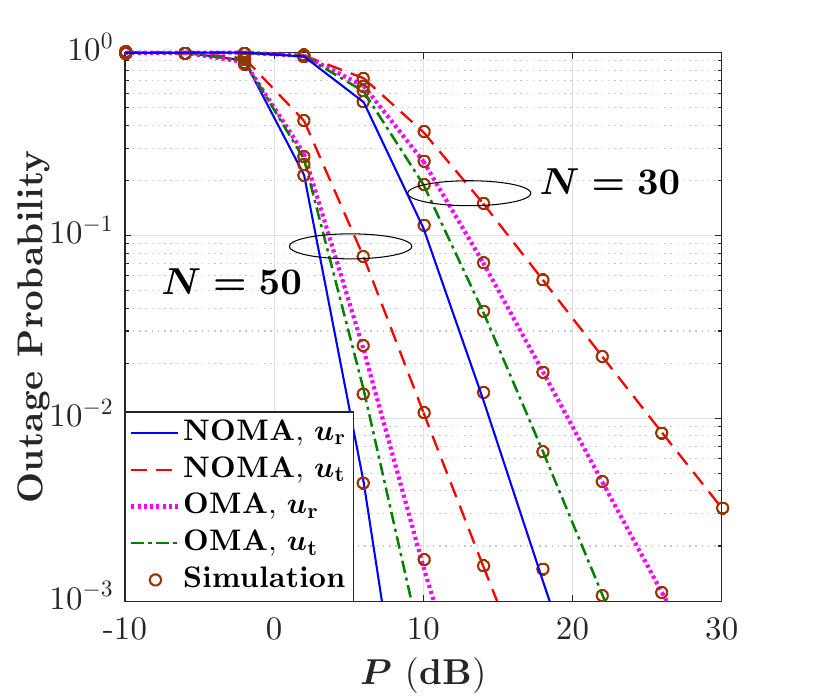}\label{fig-P-p}%
	}\hspace{-0.4cm}
	\subfigure[]{%
		\includegraphics[width=0.28\textwidth]{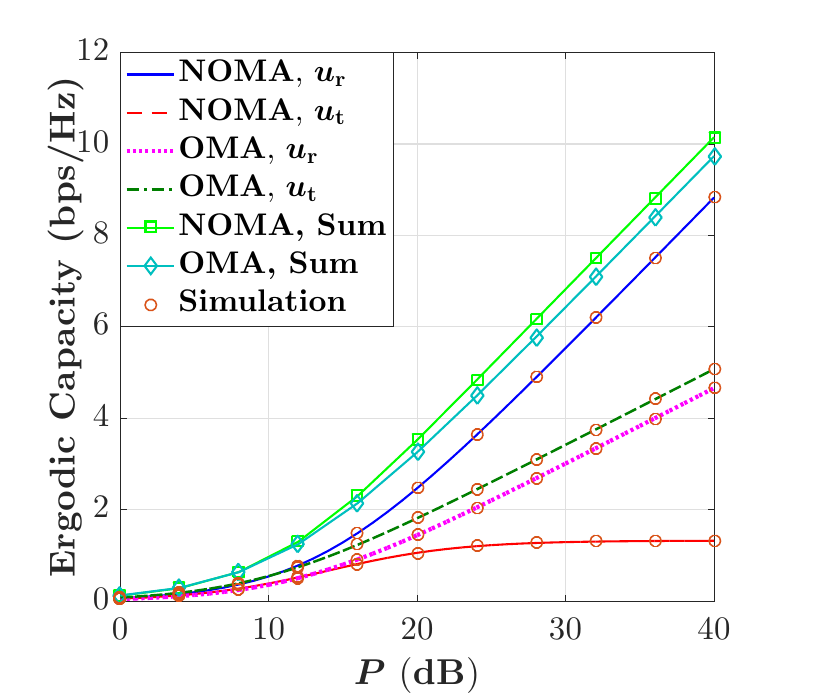}\label{fig-EC-p}%
	}\hspace{-0.53cm}
	\subfigure[]{%
		\includegraphics[width=0.28\textwidth]{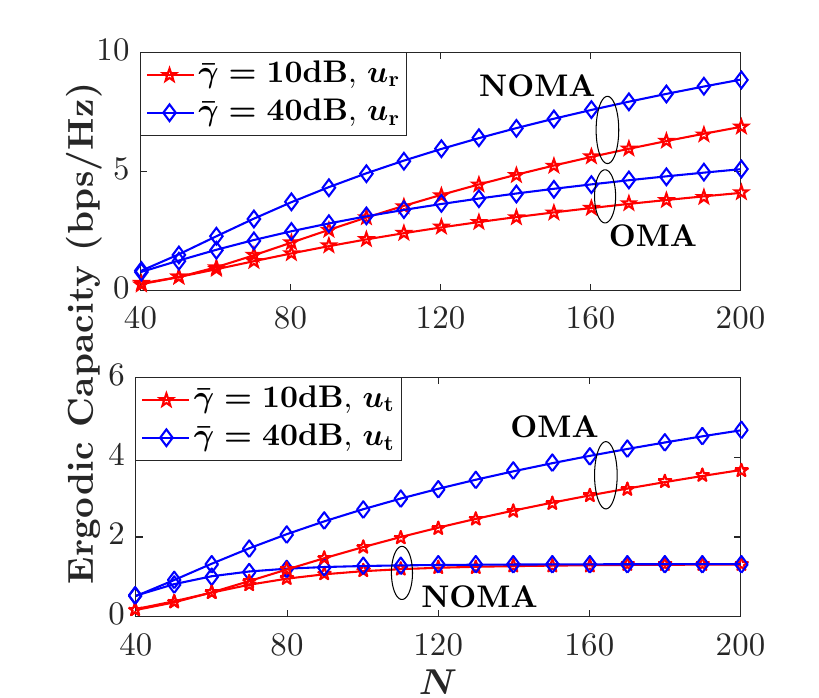}\label{fig-EC-n}%
	}\hspace{-0.46cm}
	\subfigure[]{%
	\includegraphics[width=0.28\textwidth]{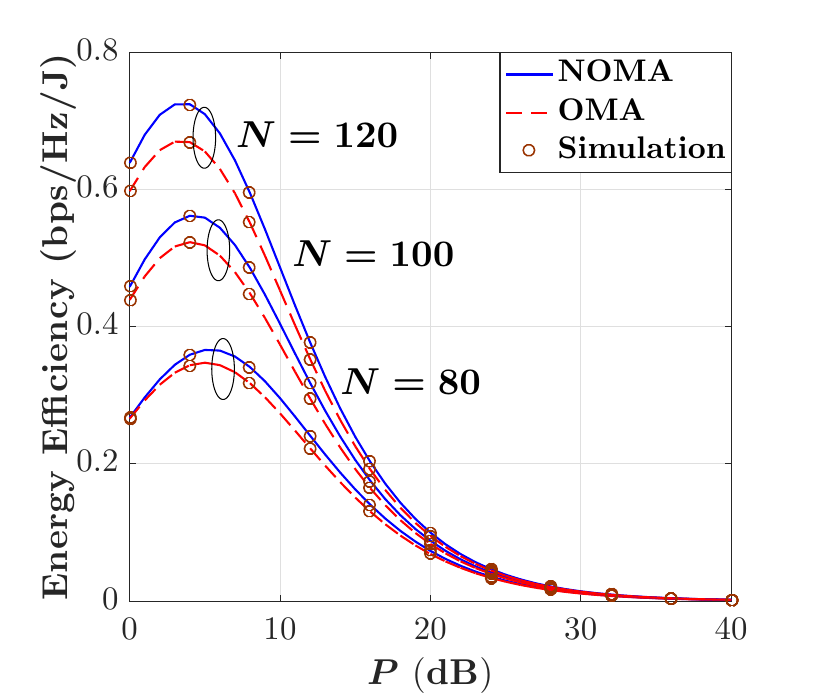}\label{fig-EE-p}%
}
\hspace{-1.5cm}
	\caption{(a) OP versus transmit power $P$ for different values of RIS/STAR-IOS elements $N$; (b) EC versus transmit power $P$ for a given value of RIS/STAR-IOS elements $N=50$; (c) EC versus the number of RIS/STAR-IOS elements $N$ for selected values of average SNR $\bar{\gamma}$; (d) EE versus the transmit power $P$ for selected values of RIS/STAR-IOS elements $N$.}\label{fig-outg}\vspace{-0.6cm}
\end{figure*}
\section{Numerical Results}\label{sec-num}
In this section, we present the numerical results to validate the theoretical expressions previously derived, which are double-checked in all instances with Monte Carlo (MC) simulations. We set the simulation parameters as $P=30$dBm, $\beta_{\mathrm{r}}=0.8$, $\beta_{\mathrm{t}}=0.6$, $p_\mathrm{r}=0.4$, $p_\mathrm{t}=0.6$, $d_\mathrm{s,R}=d_\mathrm{S,t}=d_\mathrm{S,r}=10$m, $\kappa=4$, $d_\mathrm{RS}=100$m, $\sigma^2=-90$dBm, $\alpha=1.2$, $P_\mathrm{R}=P_\mathrm{S}=P_\mathrm{t}=P_\mathrm{r}=10$dBm, and $(m_1,m_2)=(1,3)$. In addition, for the simulation purpose, we assume that $N_1=N_2=N$.

Fig. \ref{fig-P-p} presents the performance of the OP in terms of the transmit power $P$ for different numbers of RIS/STAR-IOS elements $N$ in the considered OMA/NOMA scenario under Fisher-Snedecor $\mathcal{F}$ fading channels. It can be seen that as the transmit power increases, the OP decreases for both receiver vehicles $u_\mathrm{w}$ under NOMA and OMA scenarios, which is reasonable since the channel conditions become better. We can also see that the NOMA scheme can provide better OP performance for receiver vehicle $u_\mathrm{r}$ compared with the OMA case since $u_\mathrm{r}$ exploits the SIC in the NOMA scenario. However, it can be observed that the OP behavior for the receiver vehicle $u_\mathrm{t}$ improves under OMA scheme compared with the NOMA case. Moreover, it is clearly seen that increasing the number of RIS/STAR-IOS elements $N$ can significantly improve the OP performance of $u_\mathrm{w}$ under both OMA and NOMA case since such increment can remarkably enhance the corresponding spatial diversity.

The behavior of EC in terms of the transmit power $P$ for a selected number of RIS/STAR-IOS elements $N=50$ in the considered V2V OMA/NOMA communication system is illustrated in Fig. \ref{fig-EC-p}. In the NOMA case, we can see that the EC performance for the receiver vehicle $u_\mathrm{r}$ is better than that of receiver vehicle $u_\mathrm{t}$ across the same transmit power $P$, where this superiority becomes more noticeable in the high transmit power area. This behavior indicates that the EC for $u_\mathrm{t}$ increases linearly with $P$, while for $u_\mathrm{t}$ it converges to a constant as $P$ increases. The main reason for these behaviors is that in the NOMA scheme $\mathrm{u}_r$ benefits from SIC, thereby, it has a larger SINR value compared with $u_\mathrm{t}$. It can be also seen that under the OMA scenario, the EC value increases continuously for both receiver vehicles $u_\mathrm{r}$ and $u_\mathrm{t}$  as the $P$ grows. Moreover, we can observe that the EC performance improves only for $u_\mathrm{r}$ while it deteriorates for $u_\mathrm{t}$ compared with the NOMA case. However, by comparing OMA and NOMA schemes, it can be seen that the higher system sum EC is provided by the NOMA case which means that NOMA is quite beneficial for the considered RIS/STAR-IOS communication.  
Fig. \ref{fig-EC-n} presents more insights into the effect of the number of RIS/STAR-IOS elements on the EC performance for the considered OMA/NOMA communication system. For both OMA and NOMA schemes, we can observe that the EC performance for vehicles $u_\mathrm{w}$ improves as the number of RIS/STAR-RIS elements increases. As expected, increasing the number of RIS/STAR-IOS leads to an improvement in spatial diversity, thus, enhancing the receiver vehicles' performance. It can be also seen that increasing the average SNR $\bar{\gamma}$ provides a higher EC for the vehicles $u_\mathrm{w}$ under both OMA and NOMA cases. Additionally, under NOMA scheme, we can observe that $\bar{\gamma}$ has a greater impact on the EC performance of vehicle receiver $u_\mathrm{t}$ when $N$ is small, however, when $N$ increases to a certain value, $\bar{\gamma}$ does not affect the EC performance of $u_\mathrm{t}$ anymore. The reason for this behavior is that when $N$ grows, the SINR of $u_\mathrm{t}$ reaches to a constant value, i.e., $N\rightarrow\infty, \gamma_\mathrm{t}^\mathrm{n}\rightarrow p_\mathrm{t}/p_\mathrm{r}$.

Fig. \ref{fig-EE-p} shows the performance of EE in terms of the transmit power $P$ for selected values of the number of RIS/STAR-IOS elements $N$ under OMA and NOMA scenarios. We can see that by increasing the transmit power $P$, EE initially increases and reaches its highest point at approximately the value of $P=5$dB, and then drops. This is because the EC is more dominant than power consumption before reaching the extreme point, however, this effect will reverse as $P$ grows after reaching the extreme point. It is also observed that a higher value of EE can be provided as the number of RIS/STAR-IOS elements increases which is reasonable since the EC improves under such conditions. Furthermore, it can be obtained that the NOMA case provides a better performance in terms of EE for the considered RIS/STAR-IOS communication system compared with the OMA scheme.
\section{Conclusion}\label{sec-con}
In this paper, we integrated the RIS/STAR-IOS into vehicular networks to evaluate the performance of V2V communications. To this end, by considering the OMA and NOMA scenarios, we assumed that a transmitter vehicle wants to send an independent message with the help of RIS and STAR-IOS to receiver vehicles which are located in transmission and reflection areas. In addition, we supposed that the fading channels between RIS and STAR-IOS followed Fisher-Snedecor $\mathcal{F}$ distribution. For the considered system model, we first introduced marginal distributions for the received SNR at receiver vehicles, and then, we obtained the closed-form expression of the OP, the upper bound of EC, and EE. Eventually, our numerical results showed that considering the RIS/STAR-IOS is beneficial for vehicular communications and it can remarkably improve the performance of ITS. Additionally, the results revealed that the NOMA scheme can provide a lower OP and higher values of EC and EE compared with the OMA scenario. 
\bibliographystyle{IEEEtran}
\bibliography{sample.bib}
\end{document}